\newcommand{\N}{\mathbb{N}}
\newtheorem{thm}{Theorem}[section]
\newtheorem{prp}[thm]{Proposition}
\theoremstyle{definition}
\newtheorem{dfn}[thm]{Definition}
\newtheorem{exm}[thm]{Example}
\newtheorem{ntt}[thm]{Notation}
\newcommand{\sskip}{\texttt{skip}}
\newcommand{\scomp}[2]{#1 \texttt{;} #2}
\newcommand{\sif}[3]{\texttt{if~} #1 \texttt{~then~} #2 \texttt{~else~} #3 \texttt{~fi}}
\newcommand{\swhile}[2]{\texttt{while~} #1 \texttt{~do~} #2 \texttt{~od}}
\newcommand{\tX}{\texttt{X}}
\newcommand{\tY}{\texttt{Y}}
\newcommand{\tZ}{\texttt{Z}}
\newcommand{\tH}{\texttt{H}}
\newcommand{\tS}{\texttt{S}}
\newcommand{\tT}{\texttt{T}}
\newcommand{\tCX}{\texttt{CX}}
\newcommand{\tif}{\texttt{if}}
\newcommand{\tthen}{\texttt{then}}
\newcommand{\telse}{\texttt{else}}
\newcommand{\tfi}{\texttt{fi}}
\newcommand{\sX}[1]{\tX\texttt{(}#1\texttt{)}}
\newcommand{\sY}[1]{\tY\texttt{(}#1\texttt{)}}
\newcommand{\sZ}[1]{\tZ\texttt{(}#1\texttt{)}}
\newcommand{\sH}[1]{\tH\texttt{(}#1\texttt{)}}
\newcommand{\sS}[1]{\tS\texttt{(}#1\texttt{)}}
\newcommand{\sT}[1]{\tT\texttt{(}#1\texttt{)}}
\newcommand{\sCX}[2]{\texttt{CX(}#1\texttt{,} #2\texttt{)}}
\newcommand{\sem}[1]{\llbracket #1 \rrbracket}
\def\bra#1{\mathinner{\langle{#1}|}}
\def\ket#1{\mathinner{|{#1}\rangle}}
\newcommand{\ketbra}[2]{\ket{#1}\!\!\bra{#2}}
\newcommand{\hadamard}{\mathrm{H}}
\newcommand{\phase}{\mathrm{S}}
\newcommand{\pieighth}{\mathrm{T}}
\newcommand{\CX}{\mathrm{CX}}
\newcommand{\I}{\mathrm{I}}
\newcommand{\X}{\mathrm{X}}
\newcommand{\Y}{\mathrm{Y}}
\newcommand{\Z}{\mathrm{Z}}
\newcommand{\U}{\blacksquare}
\newcommand{\D}{\heartsuit}
\newcommand{\adj}[1]{#1^{\dagger}}
\newcommand{\pr}[1]{\mathrm{pr}_{#1}}
\newcommand{\semC}[1]{\llbracket #1 \rrbracket_{\mathrm{C}}}
\newcommand{\semCl}[1]{\left\llbracket #1 \right\rrbracket_{\mathrm{C}}}
\newcommand{\semE}[1]{\llbracket #1 \rrbracket_{\mathrm{E}}}
\newcommand{\semEl}[1]{\left\llbracket #1 \right\rrbracket_{\mathrm{E}}}
\newcommand{\QIL}{\mathbf{QIL}}
\newcommand{\Sc}[1][]{\mathcal{S}_{#1}}
\newcommand{\E}[1][]{\mathcal{E}_{#1}}
\newcommand{\F}[1][]{\mathcal{F}_{#1}}
\newcommand{\trans}[2]{{#1}{#2}{\adj{#1}}}
\newcommand{\proj}[2]{{#1}{#2}{#1}}
\newcommand{\appon}[2]{{#1}_{(#2)}}
\newcommand{\Q}{\mathbf{Q}}
\title{Analysis of Quantum Entanglement in Quantum Programs using Stabilizer Formalism}
\author{Kentaro Honda
\institute{Department of Computer Science, the University of Tokyo, Japan}
\email{honda@is.s.u-tokyo.ac.jp}
}
\begin{document}
\maketitle

\begin{abstract}
  Quantum entanglement plays an important role in quantum computation and communication.
  It is necessary for many protocols and computations, but causes unexpected disturbance of computational states.
  Hence, static analysis of quantum entanglement in quantum programs is necessary.
Several papers studied the problem.
They decided qubits were entangled if multiple qubits unitary gates are applied to them, and some refined this reasoning using information about the state of each separated qubit.
However, they do not care about the fact that unitary gate undoes entanglement and that measurement may separate multiple qubits.
In this paper, we extend prior work using stabilizer formalism.
It refines reasoning about separability of quantum variables in quantum programs.
\end{abstract}

\section{Introduction}

Quantum entanglement plays an important role in quantum computation and communication.
It allows us to teleport quantum states~\cite{BBCJPW93} and reduce necessary numbers of qubits for communication~\cite{BW92}.
Moreover, it is the essential resource in a one-way quantum computation model~\cite{RB01} and indispensable for outperforming classical computers.
Quantum entanglement also introduces some difficulty in compiling quantum programs.
For example, when a system uses an ancilla, the ancilla is possibly entangled with the computation system and removal of it will disturb the computational state of the system.
Compilers of quantum programs should care about existence of quantum entanglement.
Hence, static analysis of quantum entanglement is necessary.
Several papers studied the problem using types~\cite{P07}, abstract interpretation~\cite{P08b}, and Hoare-like logic~\cite{PZ09}.
The first paper reasoned that two qubits are entangled whenever a two qubits gate is applied to these qubits.
The other papers improved the reasoning by restricting two qubit gates to the controlled-not gate $\CX$ and by memorising information about the basis of separated qubits.
Since $\CX$ does not create entanglement if the control qubit is in $\Z$-basis or the target qubit is in $\X$-basis, we can reason that two qubits are not entangled even after applying $\CX$ to the qubits.
However, these papers do not care about the fact that unitary gate undoes entanglement.
Our motivating example is as follows.
\begin{align*}
\texttt{GHZ} &\equiv \scomp{\texttt{INIT}}{\scomp{\sH{q_0}}{\scomp{\sCX{q_0}{q_1}}{\sCX{q_1}{q_2}}}}\\
\texttt{SEP}_{0} &\equiv \scomp{\texttt{GHZ}}{\scomp{\sCX{q_0}{q_1}}{\sCX{q_0}{q_2}}}
\end{align*}
where \texttt{INIT} changes states of all qubits $q_0, q_1, q_2$ into $\ket{0}$.
\texttt{GHZ} creates GHZ state $\ket{\mathrm{GHZ}} \equiv \frac{1}{\sqrt{2}}(\ket{000}+\ket{111})$, where all qubits are entangled.
$\texttt{SEP}_{0}$ destroys the entanglement without measurement.
Indeed, $(\CX\otimes \I)(\I \otimes \CX)\ket{\mathrm{GHZ}} = \ket{+00}$ and all qubits are separated.
The prior work reasons correctly that entanglement exists after \texttt{GHZ} but incorrectly that entanglement still exists after $\texttt{SEP}_{0}$.
Another example is
\begin{align*}
\texttt{SEP}_{1} &\equiv \scomp{\texttt{GHZ}}{\texttt{meas}(q_0)}\\
\texttt{NSEP} &\equiv \scomp{\texttt{GHZ}}{\scomp{\sH{q_0}}{\texttt{meas}(q_0)}}.
\end{align*}
After executions, $\texttt{SEP}_{1}$ produces all separated qubits but \texttt{NSEP} does one separated and two entangled qubits regardless of the measurement results.
In this paper, we borrow the framework of Perdrix's work~\cite{P08b} and extend it using stabilizer formalism~\cite{AG04,G96,NC00}, which gives a segment of quantum computation that can be classically simulated.
It refines reasoning about separability of quantum variables in quantum programs.

\section{Preliminaries}

\subsection{Stabilizer Formalism}

Stabilizer formalism allows us to express a certain class of states in a compact way.

Let $G_{n}$ be the Pauli group on $n$ qubits.
The stabilizer $S$ of a nontrivial subspace $V_{S}$ of the $2^{n}$-dimensional complex Hilbert space $\mathcal{H}_{2^n}$ is $\{P \in G_{n} \mid \forall \ket{\psi} \in V_{S}.\;\; P \ket{\psi} = \ket{\psi} \}$.
Any stabilizer $S$ is abelian and $-{\I}^{\otimes n} \notin S$.
A subgroup $S$ of $G_n$ is a stabilizer (on $n$ qubits) if it is the stabilizer of some nontrivial subspace of $\mathcal{H}_{2^n}$.
If $\{M_0, \ldots, M_{k-1}\}$ is a set of independent generators of $S$, we use $\langle M_0, \ldots, M_{k-1} \rangle$ to denote $S$.
If $S = \langle M_0, \ldots, M_{k-1} \rangle$, the dimension of $V_{S}$ is $2^{n-k}$.
In particular, if $k=n$, there exists a unique state $\ket{\psi_{S}}$ stabilized by $S$.
We call a state $\ket{\psi}$ is a stabilizer state if $\ket{\psi} = \ket{\psi_{S}}$ for some stabilizer $S$.
$P^{\pm}_{M_i} = \frac{1}{2}(\I^{\otimes n} \pm M_{i})$ is the projection onto eigenspaces corresponding to eigenvalues $\pm 1$.

Stabilizers have matrix expressions.
Let $S = \langle M_0, \ldots, M_{k-1} \rangle$.
Each generator $M_l$ has a form of either $\sigma_{l,0} \otimes \sigma_{l,1} \otimes \cdots \otimes \sigma_{l,n-1}$ or $-\sigma_{l,0} \otimes \sigma_{l,1} \otimes \cdots \otimes \sigma_{l,n-1}$ where $\sigma_{l,m}$ is a Pauli matrix, i.e.\ $\sigma_{l,m} \in \{\I, \X, \Y, \Z\}$.
A stabilizer array~\cite{AP05} is a $k \times (n+1)$ matrix whose $(i,j)$th entry is $\sigma_{i,j}$ for $j < n$ or the sign of $M_{i}$ for $j = n$, and it denotes $S$.
For example, $\langle -\Z\Z, \X\X \rangle = \{\I, \X\X, \Y\Y, -\Z\Z \}$ stabilizes $\frac{1}{\sqrt{2}}(\ket{01}+\ket{10})$.
$\left[\begin{array}{*3{c}} \Z & \Z & -\\ \X & \X & + \end{array} \right]$ is a stabilizer array of the stabilizer.
We identify the $i$th row of a stabilizer array and the generator $M_i$.
Obviously, both permutation of rows and multiplication of the $i$th row and the $j$th row do not change the stabilizer provided $i \neq j$ where ``multiplication of the $i$th row and the $j$th row'' is replacement of the $i$th row with the product of the $i$th row and the $j$th row.
Stabilizer arrays are compact but have sufficient information to their stabilizers.
We use stabilizer arrays to operate stabilizers.

Let $S = \langle M_0, \ldots, M_{k-1} \rangle$ and $T = \langle N_0, \ldots, N_{l-1} \rangle$ be stabilizers on $k$ and $l$ qubits.
Their tensor product $S \otimes T$ is the stabilizer $\langle M_0 \otimes \I^{\otimes l}, \ldots, M_{k-1} \otimes \I^{\otimes l}, \I^{\otimes k} \otimes N_0, \ldots, \I^{\otimes k} \otimes N_{l-1} \rangle$ on $k+l$ qubits.
In stabilizer array expression, the tensor product is the direct sum of two matrices.

When $S = \langle M_0, \ldots, M_{n-1} \rangle$ is the stabilizer of $V_{S}$, $\trans{U}{S} = \langle \trans{U}{M_0}, \ldots, \trans{U}{M_{n-1}} \rangle$ ``stabilizes'' $UV_{S}$ for any unitary gate $U$.
However, some $\trans{U}{M_i}$ may exceed $G_{n}$ and hence may not be a stabilizer.
A Clifford gate is a unitary gate that sends any stabilizer to a stabilizer.
Any Clifford gate can be composed of the controlled-X gate $\CX$, the Hadamard gate $\hadamard$, and the phase gate $\phase$.
A well-known non-Clifford gate is the $\frac{\pi}{8}$-gate $\pieighth$.
Indeed, $\pieighth\X\adj{\pieighth} = \frac{1}{\sqrt{2}}(\X+\Y)$ and $\pieighth \ket{+} = \frac{1}{\sqrt{2}}(\ket{0}+e^{\frac{\pi}{4}}\ket{1})$ is not a stabilizer state.

Let $\langle M_0, \ldots, M_{n-1} \rangle$ be a stabilizer on $n$ qubits.
If any $M_i$ commutes with $\appon{Z}{j} \equiv \I^{\otimes j} \otimes \Z \otimes \I^{\otimes n-j-1}$, i.e.\ the $j$th column of a stabilizer array consists of $\I$ or $\Z$, then the measurement result of the $j$th qubit is deterministic and does not change the state.
If not, the measurement result is probabilistic.
Through multiplication of rows, we can take a unique generator $M_{i}$ that does not commute with $\appon{Z}{j}$.
The stabilizer of the post-measurement state is $\langle M_0, \ldots, M_{i-1}, \pm \appon{Z}{j}, M_{i+1}, \ldots, M_{n-1}\rangle$ if the measurement result is $\pm 1$, respectively.

\subsection{Quantum Imperative Language}

Following prior work~\cite{P08b}, we use Quantum Imperative Language~(QIL) as a target language.
Fix the set $\Q$ of quantum variables $\{\texttt{q}_\texttt{0}, \ldots, \texttt{q}_{N-1}\}$.
We assume $\Q$ is finite and often identify a quantum variable and its index.
The syntax of QIL~\cite{P08a} is the following.
\begin{align*}
  C, C' & \Coloneqq
  \sskip
  \mid \scomp{C}{C'}
  \mid \sX{i}
  \mid \sY{i}
  \mid \sZ{i}
  \mid \sH{i}
  \mid \sS{i}
  \mid \sT{i}
  \mid \sCX{i}{j}\\
  &
  \mid \sif{i}{C}{C'}
  \mid \swhile{i}{C}
\end{align*}
where $i \neq j$.
$\QIL$ is the set of QIL programs.
The concrete denotational semantics of QIL is a superoperator $\sem{\cdot} \colon \QIL \rightarrow D_{2^{N}} \rightarrow D_{2^{N}}$ where $D_{n}$ is the set of $n$-dimensional partial density matrices, which is a CPO~\cite{S04}.
\begin{align*}
  \sem{\sskip}(\rho) & = \rho\\
  \sem{\scomp{C}{C'}}(\rho) & = \sem{C'}(\sem{C} (\rho))\\
  \sem{U(i)}(\rho) &= \trans{\appon{U}{i}}{\rho}\\
  \sem{\sCX{i}{j}}(\rho) &= \trans{\appon{\CX}{i,j}}{\rho}\\
  \sem{\sif{i}{C}{C'}}(\rho) &=
  \sem{C}(\proj{\appon{\ketbra{0}{0}}{i}}{\rho})
  + \sem{C'}(\proj{\appon{\ketbra{1}{1}}{i}}{\rho})\\
  \sem{\swhile{i}{C}}(\rho) &=
  \sum_{n \in \N}
  \proj{\appon{\ketbra{1}{1}}{i}}{{f}^{n}(\rho)}
\end{align*}
where $U \in \{ \tX, \tY, \tZ, \tH, \tS, \tT\}$, $f(\rho) = \sem{C}(\proj{\appon{\ketbra{0}{0}}{i}}{\rho})$.

QIL has a control structure and hence we can change any state of a quantum variable into a constant.
\begin{align*}
  \texttt{INIT}_{i} & \equiv \sif{i}{\sskip}{\sX{i}}\\
  \texttt{INIT} & \equiv \scomp{\texttt{INIT}_{0}}{\scomp{\texttt{INIT}_{1}}{\scomp{\cdots}{\texttt{INIT}_{N-1}}}}
\end{align*}
Indeed, $\sem{\texttt{GHZ}}(\rho) = \ketbra{\mathrm{GHZ}}{\mathrm{GHZ}}$ and $\sem{\texttt{SEP}_{0}}(\rho) = \ketbra{+00}{+00}$.

In the work~\cite{P08b}, an abstract domain $A^{\Q}$ to analyse entanglement was introduced.
An element of the domain is a pair $(b, \pi)$ of a partition $\pi$ of $\Q$ and a function $b \colon \Q \to \{\I, \X, \Z, \top\}$.
$\pi$ denotes that the quantum state $\rho$ is $\pi$-separable:
\begin{equation*}
  \rho = \sum_{k} p_k \bigotimes_{A_j \in \pi} \rho^{k,j}
\end{equation*}
where $\rho^{k,j}$ is a quantum state of $A_j$.
Moreover, if the $i$th qubit is separated from the others, $b(i)$ shows which basis it is.
For example, if $b(i) = \Z$, the quantum state $\rho$ is:
\begin{equation*}
  \rho = p_0 \ketbra{0}{0}\otimes\rho_0 + p_1 \ketbra{1}{1}\otimes\rho_1
\end{equation*}
for some $p_0, p_1, \rho_0, \rho_1$.
It implies that the $i$th qubit will be still separated even after executing $\sCX{i}{j}$.

\section{Abstract domain on stabilizers}

Although $A^{\Q}$ gives us some information about separability of a quantum state, it contains no information about entanglement except that qubits are entangled.
In order to analyse more, we will refine the abstract domain $A^{\Q}$ using stabilizer formalism.
We follow the idea of $A^{\Q}$, where $\Z$ and $\X$ denote that a state can be separated through $\ket{0}, \ket{1}$ and $\ket{+}, \ket{-}$, respectively.
We suppose that a stabilizer $S = \langle M_0, \ldots, M_{n-1}\rangle$ on $n$ qubits represents not only the stabilized state $\ket{\psi_{S}}$ but also the eigenstates of it, i.e.\ $\{\ket{\psi} \mid \forall M_i\;\; M_i\ket{\psi} = \ket{\psi} \;\mbox{or}\; M_i\ket{\psi} = -\ket{\psi}\}$.
We reuse $\ket{\psi_{S}}$ to denote an eigenstate.
The sign of each generator has no longer any meaning.
From now on, we assume any generator has the plus sign and we ignore the last column of any stabilizer array.

Our idea of using stabilizers, of course, has a problem about non-Clifford gates.
Since QIL has the $\frac{\pi}{8}$-gate $\tT$, even if we start an execution of a QIL program from a stabilizer state, we may not get a stabilizer state.
We prepare a symbol $\U$ that denotes a non-stabilizer.

Now, we introduce our abstract domain $C^{\Q}$, which is composed of assignments of stabilizers to each segment of partitions of $\Q$.
When $\sT{i}$ appears, we forget about a stabilizer that expresses the current state of the segment containing the $i$th qubit, and keep just the symbol $\U$.
Hence, when we can divide a stabilizer into the tensor product of multiple stabilizers, it is good to separate them.
In particular, if a stabilizer on multiple qubits contains $\appon{\X}{i}$, $\appon{\Y}{i}$, or $\appon{\Z}{i}$, then the $i$th qubit can be separated from the others.
Naive algorithms on a stabilizer array allow us to compute whether $\appon{\X}{i}$ belongs to a given stabilizer in $O(N)$ time and to divide a stabilizer into two stabilizers in $O(N^{2})$ time.

\begin{dfn}
  Let $\Sc[k]$ be the set of stabilizers on $k \geq 2$ qubits that are generated by $k$ independent generators and contain none of $\appon{\X}{i}$, $\appon{\Y}{i}$, and $\appon{\Z}{i}$.
  $\Sc[1] = \{\I, \langle \X \rangle, \langle \Y \rangle, \langle \Z \rangle\}$.
  We add the non-stabilizer $\U$ to all $\Sc[k]$.
  Define $\Sc = \bigcup_{k \leq N} \Sc[k]$.
  We call $\alpha \subset 2^{\Q} \times \Sc$ a \textit{(stabilizer) assignment} if $\pr{0} \alpha$ is a partition of $\Q$ and for any $(A, S) \in \alpha$, $S \in \Sc[|A|]$.
  Here, $\pr{i}$ is the $i$th projection.
  The set of stabilizer assignments is $C^{\Q}$.
\end{dfn}
\begin{ntt}
  Let $\alpha$ be an assignment.
  We sometimes regard $\alpha$ as a function from $\Q$ to $2^{\Q} \times \Sc$ such that $\alpha(i) = (A, S)$ where $i \in A$.
  We define $\alpha_0 = \pr{0} \circ \alpha, \alpha_1 = \pr{1} \circ \alpha$.
  Hence, $\alpha_0(i) \in 2^{\Q}$ and $\alpha_1(i) \in \Sc$.
  We also regard a partition of $\Q$ as a function from $\Q$ to $2^{\Q}$.
  $\alpha[(A, S)/i]$ is a new assignment $(\alpha \backslash \alpha(i)) \cup \{(A, S)\}$.
  We extend the notation into $\alpha[\{(A_0, S_0), \ldots, (A_{k-1}, S_{k-1})\}/i]$ in a natural manner.
  $\alpha[S/i]$ means $(\alpha \backslash \alpha(i)) \cup \{(\alpha_0(i), S)\}$.
  $\alpha[S/i, j] = (\alpha \backslash (\alpha(i) \cup \alpha(j))) \cup \{(\alpha_0(i)\cup\alpha_0(j), S)\}$.
\end{ntt}

\begin{dfn}
  Let $\rho$ be a quantum state and $\alpha$ be an assignment.
  We write $\alpha \vDash \rho$ if
  \begin{equation*}
    \rho = \sum_{k} p_k \bigotimes_{(A, S) \in \alpha} \rho^{k,(A,S)}
  \end{equation*}
  with some probability $p_k$ and some state $\rho^{k,(A,S)}$ on $A$ qubits where $\rho^{k,(A,S)}$ has a form of $\frac{1}{2}\I$ if $S = \I$ and $\ketbra{\psi_{S}}{\psi_{S}}$ if $S$ is another stabilizer.
\end{dfn}
Although an assignment tells how to separate a quantum state, it is just an overapproximation.
Even if a stabilizer is assigned to two qubits, it does not mean the qubits are entangled.
Indeed, although $\frac{1}{4}(\I \otimes \I)$ is a separable state, $\{(\{0,1\}, \langle \X\X, \Z\Z \rangle)\} \vDash \frac{1}{4} (\I \otimes \I)$.

Each assignment contains information about entanglement of a quantum state.
Intuitively, an assignment $\alpha$ has more information than another assignment $\beta$ if $\beta \vDash \rho$ whenever $\alpha \vDash \rho$.
It gives $C^{\Q}$ a lattice structure:
For $S, S' \in \Sc$, we write $S \leq_{s} S'$ if $S = \I$, $S' = \U$, or $S = S'$. Obviously, $\leq_{s}$ is an order.
Let $\leq_{\pi}$ be an order of partitions: $\pi \leq_{\pi} \pi'$ if for any $A' \in \pi'$, there exist $A_0, \ldots A_{k-1} \in \pi$ such that $A' = \bigcup_{i \in \{0, \ldots, k-1\}} A_i$.
Moreover, we write $\alpha \leq_{c} \beta$ if $\alpha_{0} \leq_{\pi} \beta_{0}$ and for each $i \in \Q$, 
$\bigodot_{j \in \beta_{0}(i)} \alpha(j) \leq_{s} \beta_{1}(i)$ where
\begin{equation*}
  \bigodot_{j \in J} (A_j, S_j) = \left\{
  \begin{array}{cl}
    S_j & (\mbox{all $A_j$ are the same})\\
    \I & (\mbox{all $S_j$ are $\I$})\\
    \U & (\mbox{otherwise})
  \end{array}
  \right..
\end{equation*}
The relation $\leq_{c}$ makes $C^{\Q}$ a CPO.\@

\begin{prp}
  $C^{\Q}$ is a finite lattice and hence a CPO.\@
\end{prp}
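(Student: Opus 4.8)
The plan is to establish three facts about $(C^{\Q},\leq_{c})$: it is finite, it has a least element, and every two assignments have a join. These suffice, since a finite join-semilattice with a least element is automatically a lattice (the meet of $\alpha$ and $\beta$ being the join of the finite, nonempty set of their common lower bounds), and a finite lattice is trivially a CPO --- it has a bottom, and each of its directed subsets is finite and hence contains its own supremum.

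Finiteness is a counting argument: there are finitely many partitions of the $N$-element set $\Q$, and for each block size $k$ the fibre $\Sc[k]$ is finite (there are finitely many subgroups of the Pauli group on $k$ qubits --- equivalently, finitely many $k\times k$ stabilizer arrays up to row operations --- together with the extra symbol $\U$), so there are finitely many assignments. For the poset structure I would take reflexivity and transitivity of $\leq_{c}$ as routine and check antisymmetry by noting that $\alpha\leq_{c}\beta\leq_{c}\alpha$ forces $\alpha_{0}=\beta_{0}$ by antisymmetry of $\leq_{\pi}$, whereupon each $\bigodot_{j\in\beta_{0}(i)}\alpha(j)$ collapses to $\alpha_{1}(i)$ (all the blocks $\alpha_{0}(j)$ coincide) and antisymmetry of $\leq_{s}$ gives $\alpha_{1}(i)=\beta_{1}(i)$. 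The least element is $\bot=\{(\{i\},\I)\mid i\in\Q\}$: its partition refines every partition, and for each $i$ the value $\bigodot_{j\in\alpha_{0}(i)}\bot(j)$ is $\I$ (the $\bot$-blocks in play are distinct singletons all carrying $\I$, or $\alpha_{0}(i)$ is itself a singleton), which is $\leq_{s}$ to everything.

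For the join I would first record that $\Sc[k]\cup\{\I\}$ ordered by $\leq_{s}$ is a lattice --- bottom $\I$, top $\U$, distinct stabilizers forming an antichain between them --- and write $\vee_{s}$ for its join. Then the candidate join $\gamma$ of $\alpha$ and $\beta$ is given by $\gamma_{0}=\alpha_{0}\vee\beta_{0}$ (the join in the partition lattice) and, for each block $A'$ of $\gamma_{0}$, $\gamma_{1}(A')=\big(\bigodot_{j\in A'}\alpha(j)\big)\vee_{s}\big(\bigodot_{j\in A'}\beta(j)\big)$. Three checks remain, and the third is the real work. First, $\gamma_{1}(A')\in\Sc[|A'|]$: a short case analysis shows it is $\U$ whenever one $\bigodot$-value is $\U$ or the two disagree, it equals a stabilizer $\alpha_{1}(A')$ only when all of $A'$ lies in a single $\alpha$-block --- which, being a union of $\alpha$-blocks just as $A'$ is, must coincide with $A'$ --- and it is $\I$ only when every element of $A'$ is both an $\alpha$- and a $\beta$-singleton, which by the construction of $\alpha_{0}\vee\beta_{0}$ forces $|A'|=1$; so $\gamma\in C^{\Q}$. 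Second, $\gamma$ is an upper bound of $\alpha$ and $\beta$, immediate from $\gamma_{1}(A')\geq_{s}\bigodot_{j\in A'}\alpha(j)$. Third, $\gamma$ is least: given $\delta\geq_{c}\alpha,\beta$ one has $\delta_{0}\geq_{\pi}\gamma_{0}$, and for a $\delta$-block $J$ written as a union of $\gamma$-blocks $A'_{1}\cup\dots\cup A'_{m}$ one must show $\bigodot_{j\in J}\gamma(j)\leq_{s}\big(\bigodot_{j\in J}\alpha(j)\big)\vee_{s}\big(\bigodot_{j\in J}\beta(j)\big)$ --- for $m=1$ the two sides are equal, for $m\geq2$ the right side is $\U$ as soon as some $\gamma_{1}(A'_{t})\neq\I$ (again by the ``single $\alpha$-block'' argument) and otherwise the left side is $\I$ --- and then, since $\bigodot_{j\in J}\alpha(j)$ and $\bigodot_{j\in J}\beta(j)$ both lie $\leq_{s}$-below $\delta_{1}(i)$ inside $\Sc[|J|]\cup\{\I\}$ and $\vee_{s}$ is the join there, the right side is $\leq_{s}\delta_{1}(i)$; hence $\gamma\leq_{c}\delta$.

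I expect the main obstacle to be exactly this interplay between coarsening a partition and the operation $\bigodot$ on the per-qubit data: $\bigodot$ together with $\leq_{s}$ does not by itself form a lattice ($\Sc[k]$ for $k\geq2$ has a top $\U$ but no bottom, and its stabilizers are pairwise incomparable), so the various cases --- a $\U$ present, the two sides agreeing or disagreeing, everything collapsing to $\I$ --- must be dispatched by hand; everything else (finiteness, the poset axioms, and the passage from finite lattice to CPO) is routine.
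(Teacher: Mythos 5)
Your proof is correct and follows essentially the same route as the paper: the paper likewise exhibits $\{(\Q,\U)\}$ and $\{(\{i\},\I)\mid i\in\Q\}$ as top and bottom and constructs the join of $\alpha$ and $\beta$ exactly as you do, by taking the partition join and, on each block $A$, the $\leq_{s}$-join of $\bigodot_{j\in A}\alpha(j)$ and $\bigodot_{j\in A}\beta(j)$. The only divergence is that the paper constructs the meet ``similarly'' while you obtain it for free from finiteness together with joins and a bottom; your checks that the candidate join is well-defined and least are details the paper leaves implicit.
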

\begin{proof}
  It is easy to see $\leq_{c}$ is an order.
  The maximum assignment is $\{(\Q, \U)\}$ and the minimum is $\{(\{i\}, \I) \mid i \in \Q\}$.
  Let $\alpha, \beta$ be assignments.
  Take the join $\pi$ of $\alpha_{0}$ and $\beta_{0}$ with respect to $\leq_{\pi}$.
  Let $A \in \pi$.
  Define $S$ as the join of $\bigodot_{j \in A} \alpha_{1}(j)$ and $\bigodot_{j \in A} \beta_{1}(j)$.
  The set of these pairs $(A, S)$ is the join of $\alpha$ and $\beta$.
  The meet of $\alpha$ and $\beta$ can be constructed similarly.
\end{proof}

We define an abstract semantics $\semC{\cdot} \colon \QIL \rightarrow C^{\Q} \rightarrow C^{\Q}$ inductively.
For simplicity, we define $\trans{U}{\U} = \U$ for any unitary $U$ and $\U \otimes S = \U = S \otimes \U$ for any $S \in \Sc$, and assume that conditions are exclusive and an upper condition has priority.
\begin{align*}
  \semC{\sskip}(\alpha) &= \alpha\\
  \semC{\scomp{C}{C'}}(\alpha) &= \semC{C'}(\semC{C}(\alpha))\\
  \semC{U(i)}(\alpha) &= \alpha[\trans{\appon{U}{i}}{\alpha_1(i)}/i]\\
  \semC{\sT{i}}(\alpha) &=
  \left\{
  \begin{array}{ll}
    \alpha & (\alpha_{1}(i) \;\mbox{and $\appon{\Z}{i}$ commute})\\
    \alpha[\U/i] & (\mbox{otherwise})
  \end{array}
  \right.\\
  \semC{\sCX{i}{j}}(\alpha) &=
  \left\{
  \begin{array}{ll}
    \alpha[update(\{i,j\}, \alpha_{0}(i), \trans{\appon{\CX}{i,j}}{\alpha_1(i)})/i] & (\alpha_{0}(i) = \alpha_{0}(j))\\
     & (\alpha_{1}(i) = \langle\Z\rangle,\\
    \alpha & \;\;\alpha_{1}(j) = \langle\X\rangle, \mbox{ or}\\
    & \;\;\alpha_{1}(i) = \alpha_{1}(j) = \I)\\
    \alpha[\langle Z \rangle/i] & (\alpha_{1}(i) = \I)\\
    \alpha[\langle X \rangle/j] & (\alpha_{1}(j) = \I)\\
    \alpha[\trans{\appon{\CX}{i,j}}{(\alpha_{1}(i) \otimes \alpha_{1}(j))}/i,j] & (\mbox{otherwise})
  \end{array}
  \right.\\
  \semCl{
    \begin{array}{rlc}
      \tif & i &\\
      & \tthen & C\\
      & \telse & C'\\
      \tfi & &
    \end{array}
  }(\alpha) &=
  \semC{C}(meas(i, \alpha)) \vee \semC{C'}(meas(i, \alpha))\\
  \semC{\swhile{i}{C}}(\alpha) &=
  \bigvee_{n \in \N} meas_{i} ((\semC{C} \circ meas_{i})^{n} (\alpha))
\end{align*}
where $U \in \{\tX, \tY, \tZ, \tH, \tS\}$.
$update$ makes a ``pseudo-''assignment to satisfy the condition that each stabilizer contains none of $\appon{\X}{i}$, $\appon{\Y}{i}$, and $\appon{\Z}{i}$.
The first argument is possibly-unentangled qubits.
\begin{align*}
  update(J, A, \U) &= \{(A, \U)\}\\
  update(\emptyset, A, S) &= \{(A, S)\}\\
  update(\{i\} \cup J, A, S) &=
  \left\{
  \begin{array}{ll}
    \{(\{i\}, S')\} \cup update(J, A \backslash \{i\}, S'') & (S = S' \otimes S'' \;\mbox{such that}\; S' \in \Sc[1]\;\mbox{and $S'$ has}\\
    & \;\mbox{non-identity entry only in the $i$th column})\\
    update(J, A, S) & (\mbox{otherwise})\\
  \end{array}
  \right.
\end{align*}
$meas$ means measurement and $meas_{i}(\alpha) = meas(i, \alpha)$.
After measurement, the measured qubit is always separated.
\begin{align*}
  meas(i, \alpha) &=
  \left\{
  \begin{array}{ll}
    \alpha[\langle \Z \rangle/i] & (|\alpha_{0}(i)| = 1)\\
    \alpha[\{(\{i\}, \langle \Z \rangle), (\alpha_{0}(i) \backslash \{i\}, \U)\}/i] & (\alpha_{1}(i) = \U)\\
    \alpha[update(\alpha_{0}(i), \alpha_{0}(i), meas_{st}(i, \alpha_{1}(i)))/i]
    & (\mbox{otherwise})
  \end{array}
  \right.
\end{align*}
where $meas_{st}$ is the measurement process of the $i$th qubit in stabilizer formalism.

\begin{exm}
  $\semC{\texttt{GHZ}}(\alpha) = \{(\{0,1,2\},\langle \X\X\X, \Z\Z\I, \I\Z\Z\rangle)\}$, $\semC{\texttt{SEP}_{0}}(\alpha) = \{(\{0\},\langle \X\rangle), (\{1\},\langle \Z\rangle), (\{2\},$ $\langle \Z \rangle)\}$, $\semC{\texttt{SEP}_{1}}(\alpha) = \{(\{0\},\langle \Z\rangle), (\{1\},\langle \Z\rangle), (\{2\},\langle \Z \rangle)\}$, and $\semC{\texttt{NSEP}}(\alpha) = \{(\{0\},\langle \Z\rangle), (\{1,2\},\langle \X\X,$ $\Z\Z\rangle)\}$ where $\texttt{meas}(i) \equiv \sif{i}{\sskip}{\sskip}$.
\end{exm}

\begin{exm}
  Take a QIL program $\texttt{exm}_{0} = \scomp{\sT{0}}{\sif{1}{\sskip}{\sCX{2}{3}}}$.
  Let $\alpha_{\texttt{exm}_{0}} = \{(\{0,1\}, \langle \Z\Z, \X\X \rangle ), (\{2, 3\}, \langle \Z\Z, \X\X \rangle )\}$, $\ket{B_{00}}$ be a Bell state $\frac{1}{\sqrt{2}}(\ket{00}+\ket{11})$, and $\rho_{\texttt{exm}_{0}} = \ketbra{B_{00}}{B_{00}} \otimes \ketbra{B_{00}}{B_{00}}$.
  Since $\ket{B_{00}}$ is stabilized by $\langle \Z\Z, \X\X \rangle$, $\alpha_{\texttt{exm}_{0}} \vDash \rho_{\texttt{exm}_{0}}$.
  \begin{align*}
  \left[
    \begin{array}{*4{c}}
      \Z & \multicolumn{1}{c|}{\Z} &&\\
      \X & \multicolumn{1}{c|}{\X} &&\\
      \cline{1-4}
      && \multicolumn{1}{|c}{\X} & \X \\
      && \multicolumn{1}{|c}{\Z} & \Z \\
    \end{array}
    \right]
  \xrightarrow{\sT{0}}
    \left[
    \begin{array}{*4{c}}
      \U & \multicolumn{1}{c|}{\U} &\\
      \U & \multicolumn{1}{c|}{\U} &\\
      \cline{1-4}
      && \multicolumn{1}{|c}{\X} & \X \\
      && \multicolumn{1}{|c}{\Z} & \Z \\
    \end{array}
    \right]
  \xrightarrow{meas}
    \left[
    \begin{array}{*4{c}}
      \multicolumn{1}{c|}{\U} &  &\\
      \cline{1-2}
       & \multicolumn{1}{|c|}{\Z} &\\
      \cline{2-4}
      && \multicolumn{1}{|c}{\X} & \X \\
      && \multicolumn{1}{|c}{\Z} & \Z \\
    \end{array}
    \right]
  \xrightarrow{\sCX{2}{3}}
  \left[
    \begin{array}{*4{c}}
      \multicolumn{1}{c|}{\U} &  &\\
      \cline{1-2}
      & \multicolumn{1}{|c|}{\Z} &\\
      \cline{2-3}
      && \multicolumn{1}{|c|}{\X} &  \\
      \cline{3-4}
      &&& \multicolumn{1}{|c}{\Z}\\
    \end{array}
    \right],
\end{align*}
  so $\semC{\texttt{exm}_{0}}(\alpha_{\texttt{exm}_0})$ is $\{(\{0\}, \U), (\{1\}, \Z), (\{2, 3\}, \U)\}$, which satisfies $\semC{\texttt{exm}_{0}}(\alpha_{\texttt{exm}_0}) \vDash \sem{\texttt{exm}_{0}}(\rho_{\texttt{exm}_0}) = \frac{1}{2}\ketbra{0}{0}\otimes\ketbra{0}{0}\otimes\ketbra{B_{00}}{B_{00}}+\frac{1}{2}\ketbra{1}{1}\otimes\ketbra{1}{1}\otimes\ketbra{{+}0}{{+}0}$.
  Note the join of $\{(\{2\}, \langle \X\rangle), (\{3\}, \langle \Z\rangle)\}$ and $\{(\{2,3\}, \langle \X\X, \Z\Z\rangle)\}$ is $\{(\{2, 3\}, \U)\}$.
\end{exm}

In the above example, we can see that $\tCX$ undoes quantum entanglement between the second and third qubits.
It enables us to analyse entanglement in a QIL program more deeply than the prior work.
Of course, in order to use $\semC{\cdot}$ for analysis, the abstract semantics should be sound for the concrete semantics.
Although $\semC{\cdot}$ is not generally monotone, it is sound as the abstract semantics in the paper~\cite{P08b} is.
A counterexample of $\semC{\cdot}$ being monotone is
$\alpha = \{(\{0\}, \I), (\{1\}, \langle \Z \rangle)\}$,
$\beta = \{(\{0\}, \langle \X \rangle), (\{1\}, \langle \Z \rangle)\}$, and
$C = \scomp{\scomp{\sCX{0}{1}}{\sS{1}}}{\scomp{\sH{0}}{\scomp{\sCX{0}{1}}{\sT{1}}}}$.
$\alpha <_{c} \beta$ but $\semC{C}(\alpha) >_{c} \semC{C}(\beta)$ because $\semC{C}(\alpha) = \{(\{0,1\}, \U)\}$ and $\semC{C}(\beta) = \{(\{0\}, \langle \Y \rangle), (\{1\}, \langle \U \rangle)\}$.

\begin{prp}
  For any assignment $\alpha, \beta$, and QIL program $C$, if $\alpha \leq_{c} \beta$ and $\alpha_{1}(i) \neq \I$ for any $i \in \Q$, then $\semC{C}(\alpha) \leq_{c} \semC{C}(\beta)$.
\end{prp}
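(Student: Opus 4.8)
The plan is to argue by structural induction on $C$, but with the induction hypothesis strengthened by two auxiliary facts that make the recursion go through. \emph{Fact (i):} if $\alpha \leq_{c} \beta$ and $\alpha_{1}(i) \neq \I$ for every $i$, then also $\beta_{1}(i) \neq \I$ for every $i$. This is immediate from the definition of $\leq_{c}$: if $\beta_{1}(i) = \I$ then, since $\I \in \Sc[1]$ only, $\beta_{0}(i) = \{i\}$ is a singleton, hence $\bigodot_{j \in \beta_{0}(i)} \alpha(j) = \alpha_{1}(i)$, and $\alpha_{1}(i) \leq_{s} \I$ forces $\alpha_{1}(i) = \I$, a contradiction. \emph{Fact (ii):} for any input $\gamma$ with $\gamma_{1}(i) \neq \I$ for every $i$, the output $\semC{C}(\gamma)$ again has no segment carrying $\I$. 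This is a routine sub-induction on $C$: no atomic operation can create a segment labelled $\I$, since Clifford conjugation $\trans{\appon{U}{i}}{\cdot}$ is rank-preserving and maps single-qubit-Pauli-free stabilizers to the same class, $\sT{i}$ either keeps $\gamma$ or writes $\U$, $\sCX{i}{j}$ writes only $\langle\Z\rangle$, $\langle\X\rangle$, $\U$, an output of $update$, or a conjugated tensor product $\trans{\appon{\CX}{i,j}}{(\cdot \otimes \cdot)}$ that is full-rank on at least two qubits, and $meas$ and $update$ likewise emit only $\langle\Z\rangle$, $\U$, full-rank stabilizers, or rank-one $\Sc[1]$-stabilizers with a non-identity entry; finally, a lattice join of $\I$-free assignments is $\I$-free, because each $\bigodot$ of $\I$-free entries is a stabilizer or $\U$ (never $\I$) and the $\leq_{s}$-join of two non-$\I$ elements is again non-$\I$.

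With Facts (i) and (ii) available, the easy cases of the main induction are $\sskip$ (trivial), $\scomp{C}{C'}$ (apply the hypothesis for $C$, use Fact (ii) to keep the $\I$-freeness assumption alive, then apply the hypothesis for $C'$), and the one-qubit gates $\sX{i}, \sY{i}, \sZ{i}, \sH{i}, \sS{i}, \sT{i}$, where the partitions are untouched and a direct check shows that the single-qubit Clifford conjugation respects $\leq_{s}$ on each segment; whenever $\beta$ carries $\U$ on the affected segment the comparison is automatic because $\U$ is the top of $\leq_{s}$, and the same remark absorbs the $\sT{i}$-branch that writes $\U$. The key structural observation behind this proposition --- the one that defuses the counterexample stated just above it, which essentially relied on the segment $(\{0\}, \I)$ --- is that under the hypothesis $\alpha_{1} \neq \I$ everywhere (hence, by Fact (i), also $\beta_{1} \neq \I$ everywhere) neither $\alpha$ nor $\beta$ can ever enter the branches of $\semC{\sCX{i}{j}}(\cdot)$ that test for $\I$, which is precisely where non-monotonicity would otherwise arise.

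The bulk of the argument, and the step I expect to be the \emph{main obstacle}, is the $\sCX{i}{j}$ case together with the behaviour of $update$. First one notes that $\alpha_{0}(i) = \alpha_{0}(j)$ forces $\beta_{0}(i) = \beta_{0}(j)$ (both contain $\alpha_{0}(i) = \alpha_{0}(j)$ and $\alpha_{0} \leq_{\pi} \beta_{0}$), so $\alpha$ and $\beta$ fall into matching branches. One then uses the characterisation of $\leq_{c}$ under the $\I$-free hypothesis --- each $\beta$-segment is either a single $\alpha$-segment carrying the same stabilizer, or a union of $\alpha$-segments carrying $\U$ (the ``all $S_{j} = \I$'' alternative in $\bigodot$ being vacuous) --- and checks each pairing of surviving branches. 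When the two stabilizers agree the outputs are literally identical (same $\trans{\appon{\CX}{i,j}}{\cdot}$, same $update$ call); when $\beta$ carries $\U$ on the relevant segment, the $\alpha$-side only ever refines the partition there and labels the pieces with stabilizers or $\U$, so each $\bigodot$ over those pieces is $\leq_{s} \U$ and the relation holds. The genuinely fiddly sub-case is reconciling the partition refinement that $update$ may introduce on the $\alpha$-side with the coarser $\beta$-block; this is pushed through using $\alpha_{0}(i) \subseteq \beta_{0}(i)$ and transitivity of $\leq_{\pi}$, reducing it to the previous sub-case. The $meas$ auxiliary, which appears inside the $\tif$ and $\twhile$ clauses, is handled by an essentially identical but shorter analysis of its three branches.

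Finally, the $\tif$ and $\twhile$ cases follow by combining the induction hypothesis for $C$ and $C'$ with monotonicity of $meas$ (under the $\I$-freeness assumption) and the (automatic) monotonicity of the lattice join $\vee$ on $C^{\Q}$. Since $C^{\Q}$ is finite, the join $\bigvee_{n \in \N}$ appearing in the $\twhile$ clause is in fact a finite join, so no continuity argument is needed: one only needs Fact (ii) to guarantee that every iterate $(\semC{C} \circ meas_{i})^{n}$ stays in the $\I$-free region, so that the induction hypothesis keeps applying, and then the conclusion drops out of finitely many applications of monotonicity of $\vee$.
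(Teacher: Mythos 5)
Your proposal is correct and follows exactly the route the paper intends: the paper's own proof is the single line ``By induction on the structure of $C$,'' and your argument is that same structural induction, fleshed out with the two invariants it silently relies on --- that $\I$-freeness transfers from $\alpha$ to $\beta$ and is preserved by $\semC{\cdot}$, and the resulting characterisation of $\leq_{c}$ (each $\beta$-block is either an identical $\alpha$-block or a union of $\alpha$-blocks labelled $\U$) that makes the $\sCX{i}{j}$ and $meas$ cases go through. No gaps; your write-up is simply the detailed version of what the paper asserts.
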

\begin{proof}
  By induction on the structure of $C$.
\end{proof}

\begin{thm}
  For any state $\rho$, assignment $\alpha$, and QIL program $C$, $\alpha \vDash \rho$ implies $\semC{C}(\alpha) \vDash \sem{C}(\rho)$.
\end{thm}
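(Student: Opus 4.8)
The plan is to prove the statement by structural induction on $C$, in the spirit of the soundness proof in~\cite{P08b}; the cases $\sskip$ and $\scomp{C}{C'}$ are immediate. Before the induction I would record three auxiliary facts. (i) \emph{Monotonicity of $\vDash$}: if $\alpha \leq_{c} \beta$ and $\alpha \vDash \rho$ then $\beta \vDash \rho$ --- one regroups the tensor factors of a witnessing decomposition of $\rho$ along the coarser partition $\beta_{0}$, and whenever a block $B$ of $\beta_{0}$ carries a genuine stabilizer $T$ while the enclosed $\alpha$-factors are all $\tfrac12\I$, one re-expands the maximally mixed factor in an orthonormal eigenbasis of $T$ (as in the remark that $\{(\{0,1\},\langle\X\X,\Z\Z\rangle)\} \vDash \tfrac14(\I\otimes\I)$). (ii) \emph{Closure of $\vDash$ under sums}: if $\beta \vDash \rho_{m}$ for every $m$ in a finite or countable index set and $\sum_{m}\rho_{m}$ is a partial density matrix, then $\beta \vDash \sum_{m}\rho_{m}$, by concatenating the witnessing sums (convergence is forced by bounded trace). (iii) \emph{Clifford transport}: a joint $\pm1$-eigenstate of a stabilizer $S$ is carried by a Clifford gate $U$ to a joint eigenstate of $\trans{U}{S}$, and conjugation by a single Pauli fixes $S$ modulo signs, hence fixes its set of eigenstates.

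For the atomic unitary commands $U(i)$ with $U \in \{\tX,\tY,\tZ,\tH,\tS\}$ the superoperator acts only on the tensor factor indexed by $\alpha_{0}(i)$, so a witnessing decomposition of $\rho$ is carried to one of $\sem{U(i)}(\rho)$ factorwise and (iii) matches the transported factor with $\trans{\appon{U}{i}}{\alpha_{1}(i)}$; that this conjugate again lies in $\Sc$ is clear, since a one-qubit Clifford only permutes the Pauli in column $i$ up to sign. For $\sT{i}$ the point is that a full-rank stabilizer on two or more qubits is maximal abelian, so no $\appon{\Z}{i}$ commutes with all of its generators; hence the ``$\alpha_{1}(i)$ and $\appon{\Z}{i}$ commute'' branch occurs only when $\alpha_{1}(i) \in \{\I,\langle\Z\rangle\}$ (forcing $i$ to form a singleton block), where $\pieighth$ fixes $\tfrac12\I$ and the $\Z$-eigenstates, while in the complementary branch $\alpha[\U/i]$ imposes no constraint. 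For $\sCX{i}{j}$ I would treat the five branches in four groups: when $\alpha_{1}(i)=\langle\Z\rangle$, $\alpha_{1}(j)=\langle\X\rangle$, or $\alpha_{1}(i)=\alpha_{1}(j)=\I$, the gate either does nothing to the relevant factors or conjugates one of them by a single Pauli, and (iii) preserves everything; when $\alpha_{1}(i)=\I$ (resp.\ $\alpha_{1}(j)=\I$), writing the maximally mixed qubit as $\tfrac12(\ketbra{0}{0}+\ketbra{1}{1})$ (resp.\ in the $\X$-basis) exhibits $\sem{\sCX{i}{j}}(\rho)$ as a two-term mixture realising $\alpha[\langle\Z\rangle/i]$ (resp.\ $\alpha[\langle\X\rangle/j]$); in the ``otherwise'' branch $\appon{\CX}{i,j}$ carries $\ket{\psi_{\alpha_{1}(i)}}\otimes\ket{\psi_{\alpha_{1}(j)}}$ to a joint eigenstate of $\trans{\appon{\CX}{i,j}}{(\alpha_{1}(i)\otimes\alpha_{1}(j))}$, and the exclusions accumulated by the previous branches together with maximality force this stabilizer to contain no $\appon{\X}{m}$, $\appon{\Y}{m}$, $\appon{\Z}{m}$, so it lies in $\Sc$; finally the same-segment branch reduces to correctness of $update$, i.e.\ the fact that if a stabilizer splits as $S'\otimes S''$ with $S'\in\Sc[1]$ supported on a single column then every joint eigenstate splits as $\ket{\psi_{S'}}\otimes\ket{\psi_{S''}}$.

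For $\sif{i}{C}{C'}$ I would first prove the \emph{measurement lemma}: $\alpha \vDash \rho$ implies both $meas(i,\alpha) \vDash \proj{\appon{\ketbra{0}{0}}{i}}{\rho}$ and $meas(i,\alpha) \vDash \proj{\appon{\ketbra{1}{1}}{i}}{\rho}$. In the singleton and $\U$ branches this is a direct computation (projecting any one-qubit state onto $\ket{0}$ or $\ket{1}$ gives $\ketbra{0}{0}$ or $\ketbra{1}{1}$ up to a scalar, and projecting qubit $i$ of a larger state factors that qubit off); in the remaining branch it is the stabilizer-measurement rule of the Preliminaries --- by maximality $\appon{\Z}{i}$ anticommutes with some generator, the projected eigenstate is a joint eigenstate of $meas_{st}(i,\alpha_{1}(i))$, and $update$ then splits it into a product of eigenstates exactly as in the $\sCX{i}{j}$ case. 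Granting the lemma, the induction hypothesis for $C$ and $C'$ gives $\semC{C}(meas(i,\alpha)) \vDash \sem{C}(\proj{\appon{\ketbra{0}{0}}{i}}{\rho})$ and $\semC{C'}(meas(i,\alpha)) \vDash \sem{C'}(\proj{\appon{\ketbra{1}{1}}{i}}{\rho})$; monotonicity (i) pushes both through the join, and closure (ii) yields $\semC{\sif{i}{C}{C'}}(\alpha) \vDash \sem{\sif{i}{C}{C'}}(\rho)$. For $\swhile{i}{C}$, an inner induction on $n$ using the measurement lemma and the induction hypothesis for $C$ shows $(\semC{C}\circ meas_{i})^{n}(\alpha) \vDash f^{n}(\rho)$ with $f$ as in $\sem{\swhile{i}{C}}$, whence $meas_{i}((\semC{C}\circ meas_{i})^{n}(\alpha)) \vDash \proj{\appon{\ketbra{1}{1}}{i}}{f^{n}(\rho)}$ for every $n$; since $C^{\Q}$ is finite the abstract join $\bigvee_{n}$ stabilises to a genuine assignment $\beta$, monotonicity (i) gives $\beta \vDash \proj{\appon{\ketbra{1}{1}}{i}}{f^{n}(\rho)}$ for all $n$, and closure (ii) under the countable sum gives $\beta \vDash \sum_{n}\proj{\appon{\ketbra{1}{1}}{i}}{f^{n}(\rho)} = \sem{\swhile{i}{C}}(\rho)$.

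I expect the main obstacle to be the $\sCX{i}{j}$ case: its branches each need a slightly different argument, and in particular verifying that the ``otherwise'' branch never introduces a one-qubit Pauli into the merged stabilizer --- so that $\semC{\cdot}$ is even well defined there --- depends delicately on the exclusions accumulated by the earlier branches and on maximality. The measurement lemma is the second most delicate point, since it must validate both outcomes and the ensuing $update$-driven re-separation simultaneously; the rest is bookkeeping.
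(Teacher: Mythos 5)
Your proposal is correct and follows essentially the same route as the paper's (very terse) proof: structural induction, with the measurement lemma $meas(i,\alpha)\vDash\proj{\appon{\ketbra{0}{0}}{i}}{\rho}$ and $meas(i,\alpha)\vDash\proj{\appon{\ketbra{1}{1}}{i}}{\rho}$ driving the conditional, the join-compatibility fact $\alpha\vee\beta\vDash\rho+\sigma$ (which you factor into monotonicity of $\vDash$ plus closure under sums), and finiteness of $C^{\Q}$ for the loop. You supply substantially more detail than the paper --- in particular the case analysis for $\sCX{i}{j}$ and the verification that the ``otherwise'' branch stays inside $\Sc$, which the paper dismisses as ``straightforward'' --- but the underlying argument is the same.
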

\begin{proof}
  By induction on the structure of $C$.
  For $\sskip$, $\scomp{C}{C'}$, $U(i)$, and $\sT{i}$, it is easy.
  For $\sCX{i}{j}$, there are several cases.
  But, in any case, it is straightforward that the statement holds by the definition of $\alpha \vDash \rho$ and computation in stabilizer formalism.
  Note that $\alpha \vee \beta \vDash \rho + \sigma$ whenever $\alpha \vDash \rho$ and $\beta \vDash \sigma$.
  The statement holds for $\sif{i}{C}{C'}$ because of the above fact, $meas(\alpha) \vDash \proj{\ketbra{0}{0}}{\rho}$, and $meas(\alpha) \vDash \proj{\ketbra{1}{1}}{\rho}$.
  Finally, we show for $\swhile{i}{C}$.
  Because of $meas(\alpha) \vDash \proj{\ketbra{0}{0}}{\rho}$ and the induction hypothesis, $\bigvee_{n \leq M} meas ( (\semC{C} \circ meas)^{n} (\alpha)) \vDash \sum_{n \leq M} \proj{\appon{\ketbra{1}{1}}{i}}{{f}^{n}(\rho)}$.
  Since $C^{\Q}$ is finite, $\semC{\swhile{i}{C}}(\alpha) \vDash \sum_{n \leq M} \allowbreak \proj{\appon{\ketbra{1}{1}}{i}}{{f}^{n}(\rho)}$ for sufficiently large $M$.
  Thus, the statement holds.
\end{proof}

\section{Abstract domain on extended stabilizers}

In the previous section, we use stabilizers and the symbol $\U$ that represents a non-stabilizer.
The symbol $\U$ contains no information.
It just states that the state of the associated qubits is unknown.
The abstract semantics $\semC{\cdot}$ introduces the symbol when it faces the non-Clifford gate $\tT$ because the post-execution state is a non-stabilizer state.
Can not we really extract meaningful information from the post-execution state?
Let us take the following QIL program.
\begin{equation*}
  \texttt{exm}_{1} \equiv \scomp{\texttt{GHZ}}{\scomp{\sT{1}}{\texttt{meas}(0)}}
\end{equation*}
The abstract semantics
\begin{equation*}
  \left[
    \begin{array}{*3{c}}
      \X & \X & \X\\
      \Z & \Z & \I\\
      \Z & \I & \Z
    \end{array}
    \right]
  \xrightarrow{\sT{1}}
    \left[
    \begin{array}{*3{c}}
      \U & \U & \U\\
      \U & \U & \U\\
      \U & \U & \U
    \end{array}
    \right]
    \xrightarrow{\texttt{meas}(0)}
  \left[
    \begin{array}{*4{c}}
      \multicolumn{1}{c|}{\Z} &  &\\
      \cline{1-3}
      & \multicolumn{1}{|c}{\U} & \U \\
      & \multicolumn{1}{|c}{\U} & \U \\
    \end{array}
    \right]
\end{equation*}
tells us that the first qubit is separated but the second and the third qubits may be entangled last.
Now, let us try not to fill the matrix with $\U$ when $\tT$ appears but to memorise the applied gates.
Recall that $US\adj{U}$ ``stabilizes'' $UV_{S}$ if $S$ is the stabilizer of $V_{S}$.

\begin{equation*}
  \left[
    \begin{array}{*3{c}}
      \X & \X & \X\\
      \Z & \Z & \I\\
      \Z & \I & \Z
    \end{array}
    \right]
  \xrightarrow{\sT{1}}
  \left[
    \begin{array}{*3{c}}
      \X & \pieighth\X\adj{\pieighth} & \X\\
      \Z & \Z & \I\\
      \Z & \I & \Z
    \end{array}
    \right]
  \xrightarrow{\texttt{meas}(0)}
  \left[
    \begin{array}{*4{c}}
      \multicolumn{1}{c|}{\Z} &  &\\
      \cline{1-2}
      & \multicolumn{1}{|c|}{\Z} & \\
      \cline{2-3}
      && \multicolumn{1}{|c}{\Z}
    \end{array}
    \right]
\end{equation*}
It means all qubits are separated.
Indeed, $\sem{\texttt{exm}_{1}}(\rho) = \frac{1}{2}(\ketbra{000}{000}+\ketbra{111}{111})$.
The example shows that the effect of $\tT$ may be bounded locally and will be removed later.
We introduce a new symbol $\D$, which means a unitary matrix that may not be a Pauli matrix or their tensor product.
Note that $\D$ means not only a single qubit unitary matrix but also an $n$ qubit unitary matrix.
Using the symbol $\D$, we will extend our abstract domain $C^{\Q}$ to a new domain $E^{\Q}$.
Before doing it, we extend stabilizers so that they allow us to put $\D$ on them.

\begin{dfn}
  Let $k$ be a positive integer and $A$ be a $k \times k$ matrix whose entries are in $\{\I, \X, \Y, \Z, \D\}$.
  We now identify two matrices $A$ and $B$ if one can be converted into the other via permutations and multiplications of rows.
  Here, $\D$ behaves as an absorbing element.
  We call a row containing the symbol $\D$ and a row containing no $\D$ a $\D$-row and an L-row, respectively.
  We always require any L-rows commute and these rows are independent.
  Moreover, we require that for any $\D$-row $R_i$ and row $M_j$, by substituting $\I$, $\X$, $\Y$, or $\Z$ for each $\D$ in $R_i$ and $M_j$, the result rows can commute.
  For example, the matrix consisting of two rows $\D\X$ and $\I\Z$ is excluded, but the matrix consisting of $\D\X$ and $\X\Z$ is right because substitution of $\Z$ for $\D$ makes these rows commute.
  Finally, we exclude some matrices.
  Let $k \geq 2$.
  If a matrix has a row $\I\I\cdots\I$, $\appon{\X}{j}$, $\appon{\Y}{j}$, $\appon{\Z}{j}$, or $\appon{\D}{j}$, then it is excluded.
  We name the set of those matrices $\E[k]$.
  Then, $\U$ is added into all $\E[k]$.
  $\E$ is the union of these $\E[k]$s.
\end{dfn}
\begin{exm}
  \begin{equation*}
  \left[
    \begin{array}{c}
      \I
    \end{array}
    \right],
  \left[
    \begin{array}{c}
      \D
    \end{array}
    \right],
  \left[
    \begin{array}{*2{c}}
      \X & \D\\
      \D & \X
    \end{array}
    \right],
  \left[
    \begin{array}{*3{c}}
      \D & \X & \Y\\
      \Z & \D & \D\\
      \X & \Y & \Z
    \end{array}
    \right]
  \in \E, \;\;\;\;
  \left[
    \begin{array}{ccc}
      \D & \Z & \Y\\
      \I & \X & \Y\\
      \Z & \D & \D
    \end{array}
    \right],
  \left[
    \begin{array}{cc}
      \I & \Z\\
      \X & \D
    \end{array}
    \right],
  \left[
    \begin{array}{cc}
      \D & \Y\\
      \I & \X
    \end{array}
    \right]
  \notin \E
  \end{equation*}
  The third matrix is an abstraction of matrices such as
  \begin{equation*}
  \left[
    \begin{array}{cc}
      \X & \I\\
      \I & \X
    \end{array}
    \right],
  \left[
    \begin{array}{cc}
      \X & \Z\\
      \Z & \X
    \end{array}
    \right],
  \left[
    \begin{array}{cc}
      \X & \trans{\hadamard}{\trans{\pieighth}{\X}}\\
      \Z & \X
    \end{array}
    \right].
  \end{equation*}
\end{exm}

Recall that $\Sc$ has the order $\leq_{s}$.
Regardless of the addition of $\D$, the same definition seems to give an order of $\E$:
$E \in \E$ is lower than or equal to $E' \in \E$ if $E = \I$, $E' = \U$, or $E = E'$.
However, it does not answer our purpose.
Recall the join operator corresponds with the summation of density matrices.
For example, 
$\left[\begin{array}{cc} \X & \D\\ \D & \X \end{array}\right]$
may represent 
$\left[\begin{array}{cc} \X & \I\\ \I & \X \end{array}\right]$
or
$\left[\begin{array}{cc} \X & \Z\\ \Z & \X \end{array}\right]$.
However, the summation of stabilized states by them does not always have the form of $\left[\begin{array}{cc} \X & \D\\ \D & \X \end{array}\right]$.
The example shows the join of $\left[\begin{array}{cc} \X & \D\\ \D & \X \end{array}\right]$ and $\left[\begin{array}{cc} \X & \D\\ \D & \X \end{array}\right]$ should not be itself, so the ``order'' is not reflexive.

In order to obtain a join operator, we remove rows that contain $\D$.
We give up keeping information about unitary matrices when we take a join.
\begin{dfn}
  Take $A \in \E[k]$.
  Remove all $\D$-rows.
  If all rows are $\D$-rows, we obtain $\U$.
  We call this procedure \textit{normalisation} and these matrices \textit{normal forms}.
  The set of normal forms is $\F[k]$ and the union of them is $\F$.
  We redefine $\E[k]$ so that it includes any element of $\E[k]$ even if some rows are removed.
  $\E$ is the union of them.
  Note $\F[k] \subset \E[k]$ and thus $\F \subset \E$.
\end{dfn}
\begin{ntt}
   For each $E \in \E$, $E_{nl}$ is the normal form of $E$.
\end{ntt}
\begin{exm}
  \begin{equation*}
  \left[
    \begin{array}{c}
      \I
    \end{array}
    \right],
  \left[
    \begin{array}{c}
      \U
    \end{array}
    \right],
  \left[
    \begin{array}{ccc}
      \X & \Y & \Z
    \end{array}
    \right]
  \in \F
  \end{equation*}
\end{exm}

$\F$ has an order $\leq_{f}$: $F \leq_{f} F'$ if $F = \I$, $F' = \U$, or $F = F'$.
Obviously, $\F$ has the maximum, the minimum, and the join and the meet of any two elements.
We can take an approximation of a join operator of $\E$ via the subset $\F$.

Now, we define our second abstract domain $E^{\Q}$.
\begin{dfn}
  We call $\gamma \subset 2^{\Q} \times \mathcal{E}$ an \textit{extended (stabilizer) assignment} if $\pr{0} \gamma$ is a partition of $\Q$ and for any $(A, E) \in \gamma$, $E \in \mathcal{E}_{|A|}$.
  The set of extended assignments is $E^{\Q}$.
  For each extended assignment $\gamma$, an extend assignment $\{(A, E_{nl}) \mid (A, E) \in \gamma\}$ is the \textit{normal form} of $\gamma$.
  $F^{\Q}$ is the set of normal forms of extended assignments.
\end{dfn}
\begin{ntt}
 For extended assignments, we use the same notation as for assignments.
\end{ntt}
\begin{dfn}
  Let $\rho$ be a quantum state and $\gamma$ be an extended assignment.
  We write $\gamma \vDash \rho$ if
  $\rho$ is $\pr{0}\gamma$-separable, for any L-row $L_k$ of any $\gamma_{1}(i)$ that is not $\I$ or $\U$, $P^{+}_{L_k}\rho P^{-}_{L_k} = 0$, and for any $i$ such that $\gamma_{1}(i) = \I$, $\rho = \frac{1}{2} \appon{\I}{i} \otimes \rho'$ with some state $\rho'$ of the $\Q \backslash \{i\}$ qubits.
  Recall $P^{\pm}_{L_k} = \frac{1}{2}(\I^{\otimes n} \pm L_{i})$.
\end{dfn}

The same construction as $C^{\Q}$ makes $F^{\Q}$ a CPO.\@
Although $E^{\Q}$ does not have joins, we can define an approximate join operator $\uplus$ on $E^{\Q}$ through $F^{\Q}$: for each $\gamma, \delta \in E^{\Q}$, $\gamma \uplus \delta$ is the join of the normal forms of $\gamma$ and $\delta$.
Note that the approximate join $\uplus$ of two elements can be computed efficiently.
Now, we define our second abstract semantics $\semE{\cdot} \colon \QIL \rightarrow E^{\Q} \rightarrow E^{\Q}$.
Since $\D$ loses some information, we have to avoid introducing $\D$ if possible.
For simplicity, we define $\trans{U}{\U} = \U$ for any $U$, $\U \otimes E = \U = E \otimes \U$ for any $E \in \E$, $\trans{U}{\D} = \D$ for any $1$ qubit unitary $U$, and $\trans{\CX}{(\D U)} = \trans{\CX}{(U\D)} = \D\D$ for any $U$.
Moreover, we assume that conditions are exclusive and an upper condition has priority.
\begin{align*}
  \semE{\sskip}(\gamma) &= \gamma\\
  \semE{\scomp{C}{C'}}(\gamma) &= \semE{C'}(\semE{C}(\gamma))\\
  \semE{U(i)}(\gamma) &= \gamma[\trans{\appon{U}{i}}{\gamma_1(i)}/i]\\
  \semE{\sT{i}}(\gamma) &=
  \left\{
  \begin{array}{ll}
    \gamma & (\gamma_{1}(i) \;\mbox{and $\appon{\Z}{i}$ commute})\\
    \gamma[add_{\D}(i,\gamma_{1}(i))/i] & (\mbox{otherwise})
  \end{array}
  \right.\\
  \semE{\sCX{i}{j}}(\gamma) &=
  \left\{
  \begin{array}{ll}
    \gamma[update_{E}(\{i,j\}, \gamma_{0}(i), \trans{\appon{\CX}{i,j}}{\gamma_1(i)})/i] & (\gamma_{0}(i) = \gamma_{0}(j))\\
    \gamma & (\gamma_{1}(i) = \langle\Z\rangle, \gamma_{1}(j) = \langle\X\rangle, \mbox{ or}\\
    & \;\;\gamma_{1}(i) = \gamma_{1}(j) = \I)\\
    \gamma[\langle Z \rangle/i] & (\gamma_{1}(i) = \I)\\
    \gamma[\langle X \rangle/j] & (\gamma_{1}(j) = \I)\\
    \gamma[\trans{\appon{\CX}{i,j}}{(\gamma_{1}(i) \otimes \gamma_{1}(j))}/i,j] & (\mbox{otherwise})
  \end{array}
  \right.
\end{align*}
\begin{align*}
  \semEl{
    \begin{array}{rlc}
      \tif & i &\\
      & \tthen & C\\
      & \telse & C'\\
      \tfi & &
    \end{array}
  }(\gamma) &=
  \semE{C}(meas_{E}(i, \gamma)) \uplus \semE{C'}(meas_{E}(i, \gamma))\\
  \semE{\swhile{i}{C}}(\gamma) &=
  \biguplus_{n \in \N} meas_{E, i} ( {(\semE{C} \circ meas_{E, i})}^{n} (\gamma))
\end{align*}
where $U \in \{\tX, \tY, \tZ, \tH, \tS\}$.

The result $update_{E}(J, A, E)$ is computed as follows.
(1) If $E$ belongs to $\Sc$, then
$update(J, A, E)$ is the result.
(2) If not, take all $j_0, \ldots, j_{k-1} \subset J$ such that $E$ has rows $\appon{\diamondsuit_{j_l}}{j_l}$ where $\diamondsuit_{j_l} \in \{\X, \Y, \Z\}$.
When there is no such $j_l$, $\{(A, E)\}$ is the result.
Otherwise, define $K = A \backslash \{j_0, \ldots, j_{k-1}\}$.
Then, the result is $\{(K, \U)\} \cup \{ (\{j_l\},\diamondsuit_{j_l}) \mid l = 0, \ldots, k-1\}$.

The result of $meas_{E}(i, \gamma)$ varies with $\gamma$.
(1) If $|\gamma_{0}(i)| = 1$, then $\gamma[\langle \Z \rangle/i]$ is the result.
(2) If $\gamma_{1}(i)$ belongs to $\Sc \backslash \Sc[1]$, then $\gamma[update(\gamma_{0}(i), \gamma_{0}(i), meas_{st}(i, \gamma_{1}(i)))/i]$, which is the same as $meas$.
(3) If exactly one row of $\gamma_{1}(i)$ has $\X$ or $\Y$ in the $i$th column and the others have $\I$ or $\Z$, $meas_{E}(i, \gamma)$ is computed as follows.
First, the row and the $i$th column are removed from $\gamma_{1}(i)$.
Let us call the matrix $E'$.
Then $\gamma[\{(\{i\}, \langle \Z \rangle)\} \cup update_{E}(\gamma_{0}(i) \backslash \{i\}, \gamma_{0}(i) \backslash \{i\}, E')/i]$ is the result.
(4) Otherwise, we cannot obtain information about the post-measurement state.
The result is $\gamma[\{(\{i\}, \langle \Z \rangle), (\gamma_{0}(i) \backslash \{i\}, \U)\}/i]$.

The function $add_{\D}$ changes $\X$ and $\Y$ in the $i$th column into $\D$.
By the definition of equality in $\E$, we can assume that exactly one of the following holds:
(1) the $i$th column does not contain $\X$ or $\Y$,
(2) exactly one L-row has $\X$ or $\Y$ in the $i$th column, and
(3) only $\D$-rows have $\X$ or $\Y$ in the $i$th column.
In the first case, $add_{\D}$ does nothing and returns the second argument.
In the second and third cases, $add_{\D}$ changes all $\X$ and $\Y$ in the $i$th column into $\D$ and returns the matrix.
Hence, $add_{\D}$ changes at most one L-row into a $\D$-row.

\begin{exm}
Now, we compute $\semE{\texttt{exm}_{1}}(\gamma)$.
\begin{equation*}
  \left[
    \begin{array}{ccc}
      \X & \X & \Z\\
      \Z & \Z & \I\\
      \Z & \I & \Z
    \end{array}
    \right]
  \xrightarrow{\sT{1}}
  \left[
    \begin{array}{ccc}
      \X & \D & \X\\
      \Z & \Z & \I\\
      \Z & \I & \Z
    \end{array}
    \right]
  \xrightarrow{\texttt{meas}(0)}
  \left[
    \begin{array}{cccc}
      \multicolumn{1}{c|}{\Z} &  &\\
      \cline{1-2}
      & \multicolumn{1}{|c|}{\Z} & \\
      \cline{2-3}
      && \multicolumn{1}{|c}{\Z}
    \end{array}
    \right]
\end{equation*}
Thus, we conclude that all qubits are separated.
\end{exm}

Finally, we show $\semE{\cdot}$ is sound.
\begin{thm}
  For any state $\rho$, extended assignment $\gamma$, and program $C$, $\gamma \vDash \rho$ implies $\semE{C}(\gamma) \vDash \sem{C}(\rho)$.
\end{thm}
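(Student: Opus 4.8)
The plan is to argue by induction on the structure of $C$, mirroring the soundness proof for $\semC{\cdot}$ and isolating the new work that the symbol $\D$ creates. First I would record three facts. (i) Normalisation preserves satisfaction: the clauses defining $\gamma \vDash \rho$ refer only to the L-rows of each $\gamma_{1}(i)$ and to whether $\gamma_{1}(i) = \I$, while normalisation deletes only $\D$-rows, so $\gamma \vDash \rho$ iff $\gamma_{nl} \vDash \rho$. (ii) The approximate join acts as a join for $\vDash$: running the $C^{\Q}$ argument on the finite lattice $F^{\Q}$ gives $\gamma \uplus \delta \vDash \rho + \sigma$ whenever $\gamma \vDash \rho$ and $\delta \vDash \sigma$. (iii) Conjugation transports constraints: $P^{\pm}_{\trans{U}{L}} = \trans{U}{P^{\pm}_{L}}$ for any unitary $U$, and for a single-qubit Clifford $U$ the map $\trans{\appon{U}{i}}{\cdot}$ (with the convention $\trans{U}{\D} = \D$) sends L-rows to L-rows and $\D$-rows to $\D$-rows, hence preserves membership in $\E[k]$. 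I would also note that $\vDash$ is a closed condition on $\rho$, which is what lets us pass to the supremum in the $\swhile$ case. The cases $\sskip$, composition, and $U(i)$ for $U \in \{\tX,\tY,\tZ,\tH,\tS\}$ then go exactly as in the $\semC{\cdot}$ proof: separability and the $\I$-clause are visibly preserved, and each equation $P^{+}_{L}\rho P^{-}_{L} = 0$ is carried across conjugation by $\appon{U}{i}$ using (iii). For $\sT{i}$: in the commuting branch column $i$ of $\gamma_{1}(i)$ has no $\X$ or $\Y$, so $\appon{\pieighth}{i}$ commutes with every L-row of every block and with $\frac{1}{2}\appon{\I}{i}$, giving $\semE{\sT{i}}(\gamma) = \gamma \vDash \trans{\appon{\pieighth}{i}}{\rho}$; otherwise $add_{\D}$ turns the single row anticommuting with $\appon{\Z}{i}$ (hence with $\appon{\pieighth}{i}$) into a $\D$-row, dropping exactly the one constraint conjugation by $\appon{\pieighth}{i}$ need not preserve, and one checks the output is still a legal extended stabilizer.

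The $\sCX{i}{j}$ clause splits as for $\semC{\cdot}$. When $i,j$ lie in different blocks we are either in a non-entangling case --- control in the $\Z$-basis, target in the $\X$-basis, both qubits $\I$, or one qubit $\I$ so its forced basis $\langle\Z\rangle$ or $\langle\X\rangle$ is recorded --- handled as in the earlier proof, or in the entangling case, where the two blocks merge and $\trans{\appon{\CX}{i,j}}{\cdot}$ is applied to $\gamma_{1}(i)\otimes\gamma_{1}(j)$; here the L-row equations transport by (iii) and one verifies the result lands in the right $\E[k]$. When $i,j$ already share a block, $\trans{\appon{\CX}{i,j}}{\cdot}$ conjugates that block's matrix ($\D$-rows becoming $\D\D$ in columns $i,j$, L-rows staying L-rows), every L-row equation transports, and $update_{E}$ peels off a qubit $m \in \{i,j\}$ precisely when the conjugated matrix contains a single-qubit Pauli row $\appon{\diamondsuit}{m}$. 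The crucial lemma here is that $P^{+}_{\appon{\diamondsuit}{m}}\tau P^{-}_{\appon{\diamondsuit}{m}} = 0$ forces $\tau$ to be separable at qubit $m$ with that qubit block-diagonal in the $\diamondsuit$-eigenbasis --- exactly what $(\{m\},\langle\diamondsuit\rangle)$ asserts --- while the leftover qubits receive $\U$ and carry no constraint.

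For the measurement operator I would prove $meas_{E}(i,\gamma) \vDash \proj{\appon{\ketbra{0}{0}}{i}}{\rho}$ and $meas_{E}(i,\gamma) \vDash \proj{\appon{\ketbra{1}{1}}{i}}{\rho}$ from $\gamma \vDash \rho$. Subcases (1) and (4) are immediate (the measured qubit becomes a $\Z$-eigenstate and any companion block gets $\U$), and (2) reduces to $meas$, already handled in the $\semC{\cdot}$ proof. Subcase (3) is the genuinely new one: deleting the unique row with $\X$ or $\Y$ in column $i$ discards the one constraint a $\Z$-measurement of qubit $i$ need not preserve, while for every other row $L = \appon{\I}{i}\otimes L'$ or $\appon{\Z}{i}\otimes L'$ the equation $P^{+}_{L}\rho P^{-}_{L} = 0$ descends to the $L'$-equation on $\proj{\appon{\ketbra{b}{b}}{i}}{\rho}$ because $\appon{\ketbra{b}{b}}{i}$ commutes with $P^{\pm}_{L}$; a final call to the $update_{E}$ lemma closes the block. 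Then $\sif{i}{C}{C'}$ follows from these two facts, (ii), and the induction hypothesis, and $\swhile{i}{C}$ follows from the same plus finiteness of $F^{\Q}$ (so the chain $\biguplus_{n \le M}$ stabilises and equals $\semE{\swhile{i}{C}}(\gamma)$ for large $M$) and closedness of $\vDash$ (so the stable value satisfies the monotone limit $\sum_{n}\proj{\appon{\ketbra{1}{1}}{i}}{f^{n}(\rho)} = \sem{\swhile{i}{C}}(\rho)$).

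The main obstacle is not a single deep step but the sustained bookkeeping inside $add_{\D}$, $update_{E}$, and $meas_{E}$ subcase (3): in each branch one must show simultaneously that the produced object is a legal extended assignment --- L-rows still independent and commuting after rows and columns are deleted, the $\D$-row substitution condition still met, and no excluded row (all-$\I$, $\appon{\diamondsuit}{j}$, or $\appon{\D}{j}$) left un-peeled --- and that the constraints thrown away are exactly those that cannot survive the corresponding concrete operation while every retained constraint genuinely transports. Underneath all of this the only non-routine computations are the separation lemma for single-qubit Pauli rows and the commutation of $\appon{\ketbra{b}{b}}{i}$ with $P^{\pm}_{L}$ when $L$ is $\I$ or $\Z$ on qubit $i$; everything else is careful case analysis.
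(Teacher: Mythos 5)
Your proposal is correct and follows essentially the same route as the paper's own proof: structural induction on $C$, the join lemma $\gamma \uplus \delta \vDash \rho + \sigma$, a satisfaction lemma for $meas_{E}$, and finiteness of the normal-form lattice plus a limit/continuity argument for $\swhile{i}{C}$. The paper's proof is only a brief sketch of these same steps, so your more detailed treatment of $add_{\D}$, $update_{E}$, and the $meas_{E}$ subcases simply fills in bookkeeping the paper leaves implicit.
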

\begin{proof}
  By induction on the structure of $C$.
  For $\sskip$, $\scomp{C}{C'}$, $U(i)$, and $\sT{i}$, it is easy.
  For $\sCX{i}{j}$, since the number of $\D$-rows does not increase, the statement holds.
  Extended stabilisers also satisfy $\gamma \uplus \delta \vDash \rho + \sigma$ whenever $\gamma \vDash \rho$ and $\delta \vDash \sigma$.
  For $\sif{i}{C}{C'}$, we have to check $meas_{E}$.
  However, since it also just decrease the number of $\D$-rows, $meas_{E}(\gamma)\vDash \proj{\appon{\ketbra{0}{0}}{i}}{\rho}$.
  Hence, the statement holds for $\sif{i}{C}{C'}$.
  Finally, we show for $\swhile{i}{C}$.
  Since $C^{\Q}$ is finite, $\semE{\swhile{i}{C}}(\gamma) \vDash \sum_{n \leq M} \allowbreak \proj{\appon{\ketbra{1}{1}}{i}}{{f}^{n}(\rho)}$ for sufficiently large $M$.
  Therefore, the statement holds by continuity of projection.
\end{proof}

\section{Conclusion}
We used stabilizer formalism to improve entanglement analysis in quantum programs.
First, we introduced an abstract domain $C^{\Q}$ and an abstract semantics.
It assigns stabilizers or non-stabilizers to each segment of a quantum state, where non-stabilizers are assigned when non-Clifford gates are applied to the segment.
The method enables us to analyse separability of qubits in quantum programs more precisely.
Specifically, we could deduce that all qubits are separated after executing $\texttt{SEP}_{0}$ or $\texttt{SEP}_{1}$.
Moreover, we defined an abstract domain $E^{\Q}$, as $C^{\Q}$ introduces too many non-stabilizers.
Even when non-Clifford gates appear, the domain does not discard stabilizers but keeps Pauli matrices that are not disturbed by the gates.
Hence, it suppresses effects of non-Clifford gates that will be removed later.
We showed soundness of both semantics.

In a field of model checking, stabilizer formalism was used to verify quantum programs and analyse entanglement of those programs~\cite{GNP08, GNP09}.
However, in these studies, quantum gates in a target language were restricted to only Clifford gates.
It is worth noting that our target language QIL has a non-Clifford gate.
This is a big advantage of our work and actually one of the challenges of our work was how to manage the non-Clifford gate.
We restricted the effect by overapproximation.
Although we refined the approximation from $C^{\Q}$ to $E^{\Q}$, further refinement is still needed such as finding a better approximate join operator in $E^{\Q}$.

\section*{Acknowledgements}
We thank Yoshihiko Kakutani for helpful comments.
This work was supported by JSPS Grant-in-Aid for JSPS Fellows Grant No.\ $26\mathrel{\cdot}9148$.

\bibliographystyle{eptcs}
\bibliography{Paper66}

\nocite{JP09}
\end{document}